\providecommand{\U}[1]{\protect\rule{.1in}{.1in}}
\newtheorem{theorem}{Theorem}
\newtheorem{remark}[theorem]{Remark}
\newenvironment{proof}[1][Proof]{\noindent\textbf{#1.} }{\ \rule{0.5em}{0.5em}}
\begin{document}

\title{\textbf{Exact, Rotational, Infinite Energy, Blowup Solutions to the
3-Dimensional Euler Equations}}
\author{M\textsc{anwai Yuen\thanks{E-mail address: nevetsyuen@hotmail.com }}\\\textit{Department of Applied Mathematics, The Hong Kong Polytechnic
University,}\\\textit{Hung Hom, Kowloon, Hong Kong}}
\date{Revised 21-May-2011}
\maketitle

\begin{abstract}
In this paper, we construct a new class of blowup solutions with elementary
functions to the 3-dimensional compressible or incompressible Euler and
Navier-Stokes equations. In detail, we obtain a class of global rotational
exact solutions for the compressible fluids with $\gamma>1$:%
\begin{equation}
\left\{
\begin{array}
[c]{c}%
\rho=\max\left\{  \frac{\gamma-1}{K\gamma}\left[  C^{2}\left[  x^{2}%
+y^{2}+z^{2}-\left(  xy+yz+xz\right)  \right]  -\dot{a}(t)(x+y+z)+b(t)\right]
,0\right\}  ^{\frac{1}{\gamma-1}}\\
u_{1}=a(t)+C\left(  y-z\right)  \\
u_{2}=a(t)+C\left(  -x+z\right)  \\
u_{3}=a(t)+C\left(  x-y\right)
\end{array}
\right.
\end{equation}
where
\begin{equation}
a(t)=c_{0}+c_{1}t
\end{equation}
and
\begin{equation}
b(t)=3c_{0}c_{1}t+\frac{3}{2}c_{1}^{2}t^{2}+c_{2}%
\end{equation}
with $C$, $c_{0}$, $c_{1}$ and $c_{2}$ are arbitrary constants;\newline And
the corresponding blowup or global solutions for the incompressible Euler
equations are also given. Our constructed solutions are similar to the famous
Arnold-Beltrami-Childress (ABC) flow. The solutions with infinite energy can
exhibit the interesting behaviors locally. Besides, the corresponding global
solutions are also given for the compressible Euler equations. Furthermore,
due to $\operatorname{div}\vec{u}=0$ for the solutions, the solutions also
work for the 3-dimnsional incompressible Euler and Navier-Stokes equations.

MSC: 35B40 35Q30 35C05 76D05

Key Words: Euler Equations, Exact Solutions, Rotational, Compressible,
Incompressible, Symmetry Reductions, Blowup, Global Solutions, Infinite
Energy, Navier-Stokes Equations, Free Boundary

\end{abstract}

\section{Introduction}

The $N$-dimensional Euler equations can be formulated as the follows:%
\begin{equation}
\left\{
\begin{array}
[c]{c}%
{\normalsize \rho}_{t}{\normalsize +\nabla\cdot(\rho\vec{u})}\text{
}{\normalsize =}\text{ }{\normalsize 0}\\
\rho\left[  \vec{u}_{t}+\left(  \vec{u}\cdot\nabla\right)  \vec{u}\right]
+\nabla P(\rho)\text{ }{\normalsize =}\text{ }0
\end{array}
\right.  \label{eq1}%
\end{equation}
where $\vec{x}=(x_{1},x_{2},...,x_{N})\in R^{N},$ $\rho=\rho(t,\vec{x})$ and
$\vec{u}=\vec{u}(t,\vec{x})=(u_{1},u_{2},...u_{n})\in R^{N}$ are the density
and the velocity respectively. The $\gamma$-law could be applied to the
pressure function, i.e.
\begin{equation}
P(\rho)=K\rho^{\gamma}\label{eq2}%
\end{equation}
with $K\geq0$ and $\gamma\geq1$. The Euler equations (\ref{eq1}) can be
rewritten by the scalar form,%
\begin{equation}
\left\{
\begin{array}
[c]{c}%
\frac{\partial\rho}{\partial t}+\sum_{k=1}^{N}u_{k}\frac{\partial\rho
}{\partial x_{k}}+\rho\sum_{k=1}^{N}\frac{\partial u_{k}}{\partial x_{k}%
}{\normalsize =}{\normalsize 0,}\\
\rho\left(  \frac{\partial u_{i}}{\partial t}+\sum_{k=1}^{N}u_{k}%
\frac{\partial u_{i}}{\partial x_{k}}\right)  +\frac{\partial P}{\partial
x_{i}}{\normalsize =0}\text{, for }i=1,2,...N.
\end{array}
\right.  \label{Tame}%
\end{equation}
The Euler equations (\ref{eq1}) are the fundamental model in fluid mechanics
\cite{CW} and \cite{L}.

Constructing exact solutions is a very important part in mathematical physics
to understand the nonlinear behaviors of the system. For the pressureless
fluids $K=0$, a class of exact blowup solutions were given in Yuen and Yeung's
papers \cite{YuenCNSNS2011a} and \cite{YYCNSNS2011}:%
\begin{equation}
\left\{
\begin{array}
[c]{c}%
\rho=\frac{f(\frac{x_{1}+d_{1}}{a_{1}(t)},\frac{x_{2}+d_{2}}{a_{2}%
(t)},....,\frac{x_{N}+d_{N}}{a_{N}(t)})}{\underset{i=1}{\overset{N}{\Pi}}%
a_{i}(t)}\text{, }u_{i}=\frac{\dot{a}_{i}(t)}{a_{i}(t)}\left(  x_{i}%
+d_{i}\right)  \text{ for }i=1,2,....,N\\
a_{i}(t)=a_{i0}+a_{i1}t
\end{array}
\right.
\end{equation}
with an arbitrary function $f\geq0$ and $f\in C^{1};$ and $d_{i}$, $a_{i0}>0$
and $a_{i1}$ are constants$.$\newline In particular, for $a_{2}<0$, the
solutions blow up in the finite time $T=-a_{1}/a_{2}$. There are other
analytical solutions for the compressible Euler or Navier-Stokes equations in
\cite{Sedov}, \cite{Ovsian}, \cite{Makino93exactsolutions}, \cite{Y},
\cite{Y2}, \cite{YuenCNSNS2011a}, \cite{YYCNSNS2011}, \cite{YuenPerEuler} and
\cite{YuenSelf-similarEuler1}. In addition, there are two papers to
investigate for a special structural form which generalizes the exact
solutions for Burgers' vortices in \cite{OG} and \cite{Consta}.

In this paper, we manipulate the elementary functions to construct some exact
rotational solutions for the 3-dimensional compressible Euler equations
(\ref{eq1}):

\begin{theorem}
For the 3-dimensional compressible Euler equations (\ref{eq1}), there exists a
class of rotational solutions:\newline for $\gamma>1$:%
\begin{equation}
\left\{
\begin{array}
[c]{c}%
\rho=\max\left\{  \frac{\gamma-1}{K\gamma}\left[  C^{2}\left[  x^{2}%
+y^{2}+z^{2}-\left(  xy+yz+xz\right)  \right]  -\dot{a}(t)(x+y+z)+b(t)\right]
,\text{ }0\right\}  ^{\frac{1}{\gamma-1}}\\
u_{1}=a(t)+C\left(  y-z\right) \\
u_{2}=a(t)+C\left(  -x+z\right) \\
u_{3}=a(t)+C\left(  x-y\right)
\end{array}
\right.  \label{YUEN}%
\end{equation}
where
\begin{equation}
a(t)=c_{0}+c_{1}t
\end{equation}
and
\begin{equation}
b(t)=3c_{0}c_{1}t+\frac{3}{2}c_{1}^{2}t^{2}+c_{2}%
\end{equation}
with $C$, $c_{0}$, $c_{1}$ and $c_{2}$ are arbitrary constants;\newline for
$\gamma=1$:%
\begin{equation}
\left\{
\begin{array}
[c]{c}%
\rho=e^{C^{2}\left[  x^{2}+y^{2}+z^{2}-\left(  xy+yz+xz\right)  \right]
-\dot{a}(t)(x+y+z)+b(t)}\\
u_{1}=a(t)+C\left(  y-z\right) \\
u_{2}=a(t)+C\left(  -x+z\right) \\
u_{3}=a(t)+C\left(  x-y\right)  .
\end{array}
\right.  \label{Yuen01}%
\end{equation}
The solutions (\ref{YUEN}) and (\ref{Yuen01}) globally exist.

\begin{remark}
In 1965, Arnold first introduced the famous Arnold-Beltrami-Childress (ABC)
flow%
\begin{equation}
\left\{
\begin{array}
[c]{c}%
u_{1}=A\sin z+C\cos y\\
u_{2}=B\sin x+A\cos z\\
u_{3}=C\sin y+B\cos x
\end{array}
\right.  \label{ABCFlows}%
\end{equation}
with constants $A$, $B$, $C$ and a suitable pressure function $P$ only for the
incompressible Euler equations in \cite{Arnold}. We observe that our solutions
(\ref{YUEN}) and (\ref{Yuen01}) are similar to the ABC flow.
\end{remark}

\begin{remark}
The solutions with infinite energy can exhibit the interesting behaviors
locally. The exact solutions with infinite energy of the systems may be
regionally applicable to understand the great complexity that exists in
turbulent phenomena.
\end{remark}

\begin{remark}
We notice that the rational functional form with $a(t)=0$ and $C=1$ in
solutions (\ref{YUEN}) and (\ref{Yuen01}) for the velocity $\vec{u}$ comes
from Senba and Suzuki's book \cite{ST}. The velocities $\vec{u}$ in solutions
(\ref{YUEN}) and (\ref{Yuen01}) are not spherically symmetric.
\end{remark}

\begin{remark}
The exact rotational solutions (\ref{YUEN}) and (\ref{Yuen01}) could be good
examples for testing numerical methods for fluid dynamics.
\end{remark}
\end{theorem}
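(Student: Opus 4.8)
The plan is to verify the claimed formulas by direct substitution into the scalar form (\ref{Tame}), exploiting the special cyclic structure of the velocity field. First I would compute the divergence of $\vec{u}$. Since each $u_i$ depends only on the two spatial variables with respect to which it is \emph{not} differentiated, one finds $\partial_x u_1 = \partial_y u_2 = \partial_z u_3 = 0$, hence $\operatorname{div}\vec{u}=0$. This is the pivotal structural fact: it collapses the continuity equation to the pure transport equation $\rho_t + \vec{u}\cdot\nabla\rho = 0$, and (as the remarks note) it is also why the same velocity field solves the incompressible problem.

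Next I would treat the momentum equations. Writing $G(t,x,y,z) = C^2[x^2+y^2+z^2-(xy+yz+xz)] - \dot a(t)(x+y+z) + b(t)$ for the bracketed expression, the $\gamma$-law yields, on the set where $\rho>0$, the enthalpy identity $\frac{1}{\rho}\nabla P = \frac{K\gamma}{\gamma-1}\nabla(\rho^{\gamma-1}) = \nabla G$ for $\gamma>1$, since there $\rho^{\gamma-1} = \frac{\gamma-1}{K\gamma}G$. Thus each momentum equation reduces to $\partial_t u_i + (\vec{u}\cdot\nabla)u_i + \partial_i G = 0$. A componentwise computation gives $\partial_t u_i = \dot a$ and, using that $\dot a = c_1$ is constant, $(\vec{u}\cdot\nabla)u_1 = C(u_2-u_3) = C^2(-2x+y+z)$, with the cyclic analogues for $i=2,3$. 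Since $\partial_x G = C^2(2x-y-z) - \dot a$, the $\dot a$ terms and the quadratic terms cancel pairwise, so every momentum equation holds identically.

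It remains to verify the transport equation $\rho_t + \vec{u}\cdot\nabla\rho = 0$. Because $\rho$ is a function of $G$ alone, this is equivalent to $G_t + \vec{u}\cdot\nabla G = 0$, and here the defining form of $b(t)$ enters. Decomposing $u_i = a + C v_i$ and $\partial_i G = C^2 w_i - \dot a$ with $v=(y-z,\,-x+z,\,x-y)$ and $w=(2x-y-z,\,2y-x-z,\,2z-x-y)$, the key algebraic identities are $\sum_i v_i = \sum_i w_i = 0$ and $\sum_i v_i w_i = 0$, all forced by the cyclic symmetry. These reduce $\vec{u}\cdot\nabla G$ to $-3a\dot a$, while $G_t = \dot b$ because $\ddot a = 0$. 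The transport equation thus becomes the ordinary differential equation $\dot b = 3a\dot a$, i.e.\ $b = \tfrac32 a^2 + \text{const}$; the stated $b(t)=3c_0c_1t+\tfrac32 c_1^2 t^2 + c_2$ is precisely a solution of this, which is where that expression comes from. The isothermal case $\gamma=1$ is identical with the enthalpy replaced by $K\ln\rho$, so that $\ln\rho = G$ plays the role of $\rho^{\gamma-1}$. Finally, global existence is immediate: $\vec{u}$ is a polynomial in $(t,x,y,z)$ and the $\max\{\cdot,0\}$ truncation keeps $\rho$ finite and nonnegative for all $t$, the vanishing locus of the bracket being the free boundary.

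The main obstacle I anticipate is bookkeeping rather than conceptual: confirming that the quadratic cross terms in $(\vec{u}\cdot\nabla)\vec{u}$ cancel against $\nabla G$ and that $\sum_i v_i w_i = 0$. Both hinge entirely on the cyclic, antisymmetric arrangement of the velocity components, so once that symmetry is exploited the verification is routine.
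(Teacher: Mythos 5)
Your proposal is correct and follows essentially the same route as the paper: direct substitution of the velocity ansatz, reduction of the pressure term to $\nabla G$ via the enthalpy identity, termwise cancellation in the momentum equations, and derivation of $\dot b=3a\dot a$ (with $\ddot a=0$) from the continuity equation. Your use of $\operatorname{div}\vec u=0$ to pass to the transport equation $G_t+\vec u\cdot\nabla G=0$ and the identities $\sum_i v_i=\sum_i w_i=\sum_i v_iw_i=0$ merely repackage the same cancellations the paper carries out explicitly.
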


\section{Compressible Rotational Fluids}

The main technique of this article is just to use the primary assumption about
the velocities $\vec{u}$:
\begin{equation}
\left\{
\begin{array}
[c]{c}%
u_{1}=a(t)+C\left(  y-z\right) \\
u_{2}=a(t)+C\left(  -x+z\right) \\
u_{3}=a(t)+C\left(  x-y\right)
\end{array}
\right.  \label{Yuenfunctional}%
\end{equation}
to substitute the governing equations (\ref{eq1}) to construct the density
function $\rho$ to balance the system. Therefore, the proof is simple to be
checked by direct computation:

\begin{proof}
For the vacuum solutions $\rho=0$, it is the trivial solutions for the system
(\ref{eq1}).\newline For $\gamma>1$, with non-vacuum solutions, we have the
first momentum equation (\ref{eq1})$_{2,1}$:%
\begin{equation}
\frac{\partial}{\partial t}u_{1}+u_{1}\frac{\partial}{\partial x}u_{1}%
+u_{2}\frac{\partial}{\partial y}u_{1}+u_{3}\frac{\partial}{\partial z}%
u_{1}+K\gamma\rho_{{}}^{\gamma-2}\frac{\partial}{\partial x}\rho
\end{equation}%
\begin{equation}
=\frac{\partial}{\partial t}u_{1}+u_{2}\frac{\partial}{\partial y}u_{1}%
+u_{3}\frac{\partial}{\partial z}u_{1}+\frac{K\gamma}{\gamma-1}\frac{\partial
}{\partial x}\rho_{{}}^{\gamma-1}%
\end{equation}
with%
\begin{equation}
\frac{K\gamma}{\gamma-1}\rho_{{}}^{\gamma-1}=C^{2}\left(  x^{2}+y^{2}%
+z^{2}-\left(  xy+yz+xz\right)  \right)  -\dot{a}(t)(x+y+z)+b(t)
\end{equation}
which we could require the condition for the density function:
\begin{equation}
\rho=\max\left\{  \frac{\gamma-1}{K\gamma}\left[  C^{2}\left[  x^{2}%
+y^{2}+z^{2}-\left(  xy+yz+xz\right)  \right]  -\dot{a}(t)(x+y+z)+b(t)\right]
,\text{ }0\right\}  ^{\frac{1}{\gamma-1}}, \label{Den}%
\end{equation}
to have%
\begin{align}
&  =\dot{a}(t)+\left[  a(t)+C(-x+z)\right]  \frac{\partial}{\partial y}\left[
a(t)+C\left(  y-z\right)  \right]  +\left[  a(t)+C\left(  x-y\right)  \right]
\frac{\partial}{\partial z}\left[  a(t)+C\left(  y-z\right)  \right]
\nonumber\\
&  +\frac{\partial}{\partial x}\left[  C^{2}\left[  x^{2}+y^{2}+z^{2}-\left(
xy+yz+xz\right)  \right]  -\dot{a}(t)(x+y+z)+b(t)\right]
\end{align}%
\begin{align}
&  =\dot{a}(t)+\left[  a(t)+C(-x+z)\right]  C+\left[  a(t)+C\left(
x-y\right)  (-C)\right] \nonumber\\
&  +\left[  2C^{2}x-C^{2}y-C^{2}z-\dot{a}(t)\right]
\end{align}%
\begin{equation}
=\dot{a}(t)-2C^{2}x+C^{2}z+C^{2}y+2C^{2}x-C^{2}y-C^{2}z-\dot{a}(t)
\end{equation}%
\begin{equation}
=0.
\end{equation}
\newline Similarly by the nice symmetry of the functions (\ref{YUEN}), we
could balance the second momentum equation (\ref{eq1})$_{2,2}$:%
\begin{equation}
\frac{\partial}{\partial t}u_{2}+u_{1}\frac{\partial}{\partial x}u_{2}%
+u_{3}\frac{\partial}{\partial z}u_{2}+\frac{K\gamma}{\gamma-1}\frac{\partial
}{\partial y}\rho_{{}}^{\gamma-1}%
\end{equation}%
\begin{align}
&  =\dot{a}(t)+\left[  a(t)+C\left(  y-z\right)  \right]  \frac{\partial
}{\partial x}\left[  a(t)+C\left(  -x+z\right)  \right]  +\left[
a(t)+C\left(  x-y\right)  \right]  \frac{\partial}{\partial z}\left[
a(t)+C\left(  -x+z\right)  \right] \nonumber\\
&  +\frac{\partial}{\partial y}\left[  C^{2}\left[  \left(  x^{2}+y^{2}%
+z^{2}\right)  -\left(  xy+yz+xz\right)  \right]  -\dot{a}%
(t)(x+y+z)+b(t)\right]
\end{align}%
\begin{align}
&  =\dot{a}(t)+\left[  a(t)+C\left(  y-z\right)  \right]  \left(  -C\right)
+\left[  a(t)+C\left(  x-y\right)  \right]  C\\
&  +2C^{2}y-C^{2}x-C^{2}z-\dot{a}(t)\nonumber
\end{align}%
\begin{equation}
=0.
\end{equation}
\newline For the last equation (\ref{eq1})$_{2,3}$, we have%
\begin{equation}
\frac{\partial}{\partial t}u_{3}+u_{1}\frac{\partial}{\partial x}u_{3}%
+u_{2}\frac{\partial}{\partial y}u_{3}+\frac{K\gamma}{\gamma-1}\frac{\partial
}{\partial z}\rho_{{}}^{\gamma-1}%
\end{equation}%
\begin{align}
&  =\dot{a}(t)+\left[  a(t)+C\left(  y-z\right)  \right]  \frac{\partial
}{\partial x}\left[  a(t)+C\left(  x-y\right)  \right]  +\left[  a(t)+C\left(
-x+z\right)  \right]  \frac{\partial}{\partial y}\left[  a(t)+C\left(
x-y\right)  \right] \nonumber\\
&  +\frac{\partial}{\partial z}\left[  C^{2}\left[  \left(  x^{2}+y^{2}%
+z^{2}\right)  -\left(  xy+yz+xz\right)  \right]  -\dot{a}%
(t)(x+y+z)+b(t)\right]
\end{align}%
\begin{equation}
=\dot{a}(t)+\left[  a(t)+C\left(  y-z\right)  \right]  C+\left[  a(t)+C\left(
-x+z\right)  \right]  (-C)+2C^{2}z-C^{2}y-C^{2}x-\dot{a}(t)
\end{equation}%
\begin{equation}
=0.
\end{equation}

For the mass equation (\ref{eq1})$_{1}$, we have with the density (\ref{Den})
to obtain:%
\begin{equation}
\rho_{t}+\nabla\cdot(\rho\vec{u})
\end{equation}%
\begin{align}
&  =\frac{\partial}{\partial t}\left[  \frac{\gamma-1}{K\gamma}\left(
C^{2}\left(  x^{2}+y^{2}+z^{2}-\left(  xy+yz+xz\right)  \right)  -\dot
{a}(t)(x+y+z)+b(t)\right)  \right]  ^{\frac{1}{\gamma-1}}\nonumber\\
&  +\frac{\partial}{\partial x}\left[  \frac{\gamma-1}{K\gamma}\left(
C^{2}\left(  x^{2}+y^{2}+z^{2}-\left(  xy+yz+xz\right)  \right)  -\dot
{a}(t)(x+y+z)+b(t)\right)  \right]  ^{\frac{1}{\gamma-1}}\left(
a(t)+C(y-z)\right) \nonumber\\
&  +\frac{\partial}{\partial y}\left[  \frac{\gamma-1}{K\gamma}\left(
C^{2}\left(  x^{2}+y^{2}+z^{2}-\left(  xy+yz+xz\right)  \right)  -\dot
{a}(t)(x+y+z)+b(t)\right)  \right]  ^{\frac{1}{\gamma-1}}\left(  a(t)+C\left(
-x+z\right)  \right) \nonumber\\
&  +\frac{\partial}{\partial z}\left[  \frac{\gamma-1}{K\gamma}\left(
C^{2}\left(  x^{2}+y^{2}+z^{2}-\left(  xy+yz+xz\right)  \right)  -\dot
{a}(t)(x+y+z)+b(t)\right)  \right]  ^{\frac{1}{\gamma-1}}\left(  a(t)+C\left(
x-y\right)  \right)
\end{align}%
\begin{align}
&  =\frac{1}{\gamma-1}\left[  \frac{\gamma-1}{K\gamma}\left(  C^{2}\left(
x^{2}+y^{2}+z^{2}-\left(  xy+yz+xz\right)  \right)  -\dot{a}%
(t)(x+y+z)+b(t)\right)  \right]  ^{\frac{1}{\gamma-1}-1}\nonumber\\
&  \cdot\left\{
\begin{array}
[c]{c}%
\frac{\gamma-1}{K\gamma}\left[  -\ddot{a}(t)(x+y+z)+\dot{b}(t)\right] \\
+\frac{\gamma-1}{K\gamma}\left[  \left(  2C^{2}x-C^{2}y-C^{2}z-\dot
{a}(t)\right)  \left(  a(t)+C(y-z)\right)  \right] \\
+\frac{\gamma-1}{K\gamma}\left[  \left(  2C^{2}y-C^{2}x-C^{2}z-\dot
{a}(t)\right)  \left(  a(t)+C\left(  -x+z\right)  \right)  \right] \\
+\frac{\gamma-1}{K\gamma}\left[  \left(  2C^{2}z-C^{2}x-C^{2}y-\dot
{a}(t)\right)  \left(  a(t)+C(x-y)\right)  \right]
\end{array}
\right\}
\end{align}%
\begin{align}
&  =\frac{1}{K\gamma}\left[  \frac{\gamma-1}{K\gamma}\left(  C^{2}\left(
x^{2}+y^{2}+z^{2}-\left(  xy+yz+xz\right)  \right)  -\dot{a}%
(t)(x+y+z)+b(t)\right)  \right]  ^{\frac{1}{\gamma-1}-1}\nonumber\\
&  \cdot\left\{
\begin{array}
[c]{c}%
-\ddot{a}(t)(x+y+z)+\dot{b}(t)\\
+\left(  2C^{2}x-C^{2}y-C^{2}z-\dot{a}(t)\right)  a(t)+C\left(  2C^{2}%
x-C^{2}y-C^{2}z-\dot{a}(t)\right)  (y-z)\\
+\left(  2C^{2}y-C^{2}x-C^{2}z-\dot{a}(t)\right)  a(t)+C\left(  2C^{2}%
y-C^{2}x-C^{2}z-\dot{a}(t)\right)  \left(  -x+z\right) \\
+\left(  2C^{2}z-C^{2}x-C^{2}y-\dot{a}(t)\right)  a(t)+C\left(  2C^{2}%
z-C^{2}x-C^{2}y-\dot{a}(t)\right)  (x-y)
\end{array}
\right\}
\end{align}%
\begin{align}
&  =\frac{1}{K\gamma}\left[  \frac{\gamma-1}{K\gamma}\left(  C^{2}\left(
x^{2}+y^{2}+z^{2}-\left(  xy+yz+xz\right)  \right)  -\dot{a}%
(t)(x+y+z)+b(t)\right)  \right]  ^{\frac{1}{\gamma-1}-1}\nonumber\\
&  \cdot\left\{  -\ddot{a}(t)(x+y+z)+\dot{b}(t)-3\dot{a}(t)a(t)\right\}
\end{align}%
\begin{equation}
=0
\end{equation}
by choosing%
\begin{equation}
a(t)=c_{0}+c_{1}t
\end{equation}
and%
\begin{equation}
\dot{b}(t)=3\dot{a}(t)a(t)
\end{equation}%
\begin{equation}
\dot{b}(t)=3c_{1}\left(  c_{0}+c_{1}t\right)
\end{equation}%
\begin{equation}
b(t)=3c_{0}c_{1}t+\frac{3}{2}c_{1}^{2}t^{2}+c_{2}%
\end{equation}
where $c_{2}$ is the arbitrary constant.\newline As for $\gamma=1$, the proof
is similar, we may skip the details here.\newline It is clear to see that the
solutions (\ref{YUEN}) and (\ref{Yuen01}) globally exist.\newline The proof is completed.
\end{proof}

Here the masses of the solutions (\ref{YUEN}) and (\ref{Yuen01}) are infinite:%
\begin{equation}
\int_{R^{3}}\rho dx=+\infty.
\end{equation}

\section{Incompressible Rotational Fluids}

For the 3-dimensional incompressible Euler equations:%
\begin{equation}
\left\{
\begin{array}
[c]{c}%
\operatorname{div}\vec{u}\text{ }=\text{ }0\\
\rho\left[  \vec{u}_{t}+\left(  \vec{u}\cdot\nabla\right)  \vec{u}\right]
+\nabla P(\rho)\text{ }=\text{ }0,
\end{array}
\right.  \label{incom}%
\end{equation}
the solutions with elliptical symmetric velocity in \cite{Sedov},
\cite{Ovsian} and \cite{YuenSelf-similarEuler1} for the compressible Euler
equations (\ref{eq1}), could not be applied to the incompressible Euler
equations (\ref{incom}) because of the linear functional form:%
\begin{equation}
\left\{
\begin{array}
[c]{c}%
u_{1}=c_{1}(t)x+d_{1}(t)\\
u_{2}=c_{2}(t)y+d_{2}(t)\\
u_{3}=c_{3}(t)z+d_{3}(t)
\end{array}
\right.  \label{Lii11}%
\end{equation}
with some functions $c_{i}(t)$ and $d_{i}(t)$ to have%
\begin{equation}
\operatorname{div}\vec{u}\neq0
\end{equation}
for non-trivial solutions $c_{1}(t)+c_{2}(t)+c_{3}(t)\neq0$. However, our
solutions (\ref{YUEN}) and (\ref{Yuen01}) could be applied to the
incompressible Euler equations (\ref{incom}). The set of the corresponding
solutions for the incompressible Euler equations (\ref{incom}) could be larger
than the compressible ones (\ref{eq1}):

\begin{theorem}
For the 3-dimensional incompressible Euler equations (\ref{incom}), there
exists a class of rotational solutions:%
\begin{equation}
\left\{
\begin{array}
[c]{c}%
\frac{P(\rho)}{\rho}=\left[
\begin{array}
[c]{c}%
C^{2}\left[  x^{2}+y^{2}+z^{2}-\left(  xy+yz+xz\right)  \right]  -\left[
\dot{a}_{1}(t)+C(a_{2}(t)+a_{3}(t))\right]  x\\
-\left[  \dot{a}_{2}(t)+C(a_{1}(t)+a_{3}(t))\right]  y-\left[  \dot{a}%
_{3}(t)+C(a_{1}(t)+a_{2}(t))\right]  z+b(t)
\end{array}
\right] \\
u_{1}=a_{1}(t)+C\left(  y-z\right) \\
u_{2}=a_{2}(t)+C\left(  -x+z\right) \\
u_{3}=a_{3}(t)+C\left(  x-y\right)
\end{array}
\right.  \label{YUEN2}%
\end{equation}
and
\begin{equation}
\left\{
\begin{array}
[c]{c}%
\frac{P(\rho)}{\rho}=\left[
\begin{array}
[c]{c}%
-C^{2}\left(  x^{2}+y^{2}+z^{2}+xy+yz+xz\right) \\
-\left[  \dot{a}_{1}(t)+C(a_{2}(t)+a_{3}(t))\right]  x\\
-\left[  \dot{a}_{2}(t)+C(a_{1}(t)+a_{3}(t))\right]  y\\
-\left[  \dot{a}_{3}(t)+C(a_{1}(t)+a_{2}(t))\right]  z+b(t)
\end{array}
\right] \\
u_{1}=a_{1}(t)+C\left(  y+z\right) \\
u_{2}=a_{2}(t)+C\left(  x+z\right) \\
u_{3}=a_{3}(t)+C\left(  x+y\right)
\end{array}
\right.  \label{Yuen3}%
\end{equation}
where $a_{i}(t)$ is an arbitrary local $C^{1}$ function for $i=1,$ $2$ $or$
$3,$ $b(t)$ is an arbitrary function, $C$ is an arbitrary constant.\newline In
particular,\newline1. if $\left\vert a_{i}(T)\right\vert =\infty$ or
$\left\vert \dot{a}_{i}(T)\right\vert =\infty$ with the first finite positive
constant $T$, the solutions (\ref{YUEN2}) and (\ref{Yuen3}) blow up in the
finite time $T$;\newline2. For global $C^{1}$ functions $a_{i}(t)$, the
solutions (\ref{YUEN2}) and (\ref{Yuen3}) globally exist.
\end{theorem}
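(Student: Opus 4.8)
The plan is to verify both families (\ref{YUEN2}) and (\ref{Yuen3}) by direct substitution, in the same spirit as the compressible proof, but organised around the algebraic structure of the rotational velocity field. First I would dispose of the incompressibility constraint: in each family the $i$-th component is independent of the $i$-th variable, so for (\ref{YUEN2}) one has $\partial_{x}u_{1}=\partial_{y}u_{2}=\partial_{z}u_{3}=0$, and likewise for (\ref{Yuen3}); hence $\operatorname{div}\vec{u}=0$ holds identically for every choice of $C^{1}$ functions $a_{i}(t)$. This step is immediate and places no restriction on the $a_{i}$.

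The substance lies in the momentum equation of (\ref{incom}). Since the density enters only as a multiplicative factor and the incompressible system imposes no evolution on $\rho$, the momentum equation is equivalent to $\vec{u}_{t}+(\vec{u}\cdot\nabla)\vec{u}=-\nabla(P(\rho)/\rho)$, so it suffices to show that the convective acceleration $\vec{F}:=\vec{u}_{t}+(\vec{u}\cdot\nabla)\vec{u}$ is a gradient and to read off its potential. I would write $\vec{u}=\vec{a}(t)+\vec{v}$, where $\vec{v}$ is the linear part (for (\ref{YUEN2}), $\vec{v}=C(y-z,\,-x+z,\,x-y)$, a rigid rotation about the axis $(1,1,1)$). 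Because $\vec{a}$ is spatially constant, $\vec{F}=\dot{\vec{a}}+(\vec{a}\cdot\nabla)\vec{v}+(\vec{v}\cdot\nabla)\vec{v}$. The self-interaction is the centripetal term $(\vec{v}\cdot\nabla)\vec{v}=-\nabla\left(C^{2}\left[x^{2}+y^{2}+z^{2}-(xy+yz+xz)\right]\right)$, which supplies the quadratic part of $P/\rho$, while $\dot{\vec{a}}+(\vec{a}\cdot\nabla)\vec{v}$ is constant in space in each component and so is the gradient of a linear function, producing the linear-in-$(x,y,z)$ part; the residual arbitrary function of $t$ is exactly $b(t)$.

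The key step, and the one I expect to demand the most care, is confirming that $\vec{F}$ is genuinely curl-free so that this potential exists. This is forced by the skew structure of $\vec{v}$: the mixed derivatives $\partial_{x_{j}}F_{i}$ and $\partial_{x_{i}}F_{j}$ collapse to the common value $C^{2}$, whence $\operatorname{curl}\vec{F}=0$. For family (\ref{Yuen3}) the velocity $\vec{v}=C(y+z,\,x+z,\,x+y)=C\nabla(xy+yz+zx)$ is instead irrotational, so $(\vec{v}\cdot\nabla)\vec{v}=\nabla(\tfrac12|\vec{v}|^{2})$ and $(\vec{a}\cdot\nabla)\vec{v}=\nabla(\vec{a}\cdot\vec{v})$ are automatically gradients; repeating the integration yields the stated quadratic form $-C^{2}(x^{2}+y^{2}+z^{2}+xy+yz+xz)$. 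The delicate bookkeeping is the precise signs of the linear coefficients $Ca_{i}$ in $P/\rho$, which must be extracted from $(\vec{a}\cdot\nabla)\vec{v}$ term by term.

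Finally, the blowup/global dichotomy follows at once from the explicit form of the solution. The velocity and the pressure are built solely from $a_{i}(t)$ and $\dot{a}_{i}(t)$, and — crucially — the continuity equation that in the compressible case (\ref{YUEN}) forced $a(t)$ to be affine in $t$ is here replaced by $\operatorname{div}\vec{u}=0$, which is satisfied by any $C^{1}$ data. Hence if $|a_{i}(T)|=\infty$ or $|\dot{a}_{i}(T)|=\infty$ at the first finite $T>0$ the solution blows up at that $T$, and if all $a_{i}\in C^{1}$ are global the solution exists globally; this is also why the incompressible solution class can be larger than the compressible one.
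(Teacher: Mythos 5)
Your route is genuinely different from the paper's. The paper verifies each of the three momentum components of each family by brute-force substitution and term-by-term cancellation; you instead write $\vec{u}=\vec{a}(t)+B\vec{x}$ with $B$ a constant matrix ($C$ times a skew-symmetric matrix for (\ref{YUEN2}), $C$ times a symmetric one for (\ref{Yuen3})), note $\operatorname{div}\vec{u}=\operatorname{tr}B=0$, and reduce everything to showing that $\vec{F}=\dot{\vec{a}}+B\vec{a}+B^{2}\vec{x}$ is curl-free with an explicitly integrable potential. Since $B^{2}$ is symmetric in both cases this works, and it explains in one line why the quadratic part of $P/\rho$ is $-\tfrac{1}{2}\vec{x}^{T}B^{2}\vec{x}$; your observation that the (\ref{Yuen3}) field is $C\nabla(xy+yz+zx)$, so that $(\vec{v}\cdot\nabla)\vec{v}=\nabla(\tfrac{1}{2}|\vec{v}|^{2})$ and $(\vec{a}\cdot\nabla)\vec{v}=\nabla(\vec{a}\cdot\vec{v})$, is cleaner than anything in the paper. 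Your handling of the blowup/global dichotomy coincides with the paper's: it is read off from the explicit dependence of the formulas on $a_{i}$ and $\dot{a}_{i}$.

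However, the one step you explicitly postpone --- extracting ``the precise signs of the linear coefficients $Ca_{i}$ in $P/\rho$'' from $(\vec{a}\cdot\nabla)\vec{v}$ --- is exactly where the verification of (\ref{YUEN2}) as literally stated breaks down, so you cannot leave it as bookkeeping. For the skew field $\vec{v}=C(y-z,\,-x+z,\,x-y)$ one gets $(\vec{a}\cdot\nabla)\vec{v}=B\vec{a}=C(a_{2}-a_{3},\;a_{3}-a_{1},\;a_{1}-a_{2})$, so integrating $\vec{F}$ produces linear coefficients $-[\dot{a}_{1}+C(a_{2}-a_{3})]$, $-[\dot{a}_{2}+C(a_{3}-a_{1})]$, $-[\dot{a}_{3}+C(a_{1}-a_{2})]$, not the $-[\dot{a}_{1}+C(a_{2}+a_{3})]$ etc.\ printed in (\ref{YUEN2}). (Consistency check: when $a_{1}=a_{2}=a_{3}=a$ the corrected coefficients reduce to $-\dot{a}$, matching Theorem 1 and (\ref{YUEN}); the printed ones would give $-\dot{a}-2Ca$, which does not.) The paper's own computation contains the corresponding slip, collapsing $Ca_{2}-Ca_{3}$ into $C(a_{2}+a_{3})$ in the displayed cancellation. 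For (\ref{Yuen3}) the matrix is symmetric, $(\vec{a}\cdot\nabla)\vec{v}=C(a_{2}+a_{3},\,a_{1}+a_{3},\,a_{1}+a_{2})$, and the printed formula is correct. So finish the deferred computation: your method is sound, but carried out it corrects the linear terms in (\ref{YUEN2}) rather than confirming them.
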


\begin{proof}
For the checking of the functions (\ref{YUEN2}), the detail is similar to the
proof of Theorem 1:

For the checking of the functions (\ref{Yuen3}), we have for the first
momentum equation (\ref{incom})$_{2,2}$:%
\begin{equation}
\frac{\partial}{\partial t}u_{1}+u_{1}\frac{\partial}{\partial x}u_{1}%
+u_{2}\frac{\partial}{\partial y}u_{1}+u_{3}\frac{\partial}{\partial z}%
u_{1}+\frac{\partial}{\partial x}\left(  \frac{P}{\rho}\right)
\end{equation}%
\begin{equation}
=\frac{\partial}{\partial t}u_{1}+u_{2}\frac{\partial}{\partial y}u_{1}%
+u_{3}\frac{\partial}{\partial z}u_{1}+\frac{\partial}{\partial x}\left(
\frac{P}{\rho}\right)
\end{equation}
by requiring the pressure function:
\begin{equation}
\frac{P}{\rho}=\left[
\begin{array}
[c]{c}%
C^{2}\left[  x^{2}+y^{2}+z^{2}-\left(  xy+yz+xz\right)  \right]  -\left[
\dot{a}_{1}(t)+C(a_{2}(t)+a_{3}(t))\right]  x\\
-\left[  \dot{a}_{2}(t)+C(a_{1}(t)+a_{3}(t))\right]  y-\left[  \dot{a}%
_{3}(t)+C(a_{1}(t)+a_{2}(t))\right]  z+b(t)
\end{array}
\right]
\end{equation}
to have%
\begin{align}
&  =\dot{a}_{1}(t)+\left[  a_{2}(t)+C(-x+z)\right]  \frac{\partial}{\partial
y}\left[  a_{1}(t)+C\left(  y-z\right)  \right]  +\left[  a_{3}(t)+C\left(
x-y\right)  \right]  \frac{\partial}{\partial z}\left[  a_{1}(t)+C\left(
y-z\right)  \right] \nonumber\\
&  +\frac{\partial}{\partial x}\left[
\begin{array}
[c]{c}%
C^{2}\left[  x^{2}+y^{2}+z^{2}-\left(  xy+yz+xz\right)  \right]  -\left[
\dot{a}_{1}(t)+C(a_{2}(t)+a_{3}(t))\right]  x\\
-\left[  \dot{a}_{2}(t)+C(a_{1}(t)+a_{3}(t))\right]  y-\left[  \dot{a}%
_{3}(t)+C(a_{1}(t)+a_{2}(t))\right]  z+b(t)
\end{array}
\right]
\end{align}%
\begin{align}
&  =\dot{a}_{1}(t)+\left[  a_{2}(t)+C(-x+z)\right]  C+\left[  a_{3}%
(t)+C\left(  x-y\right)  (-C)\right] \nonumber\\
&  +\left[  2C^{2}x-C^{2}y-C^{2}z-\dot{a}(t)-C(a_{2}(t)+a_{3}(t))\right]
\end{align}%
\begin{equation}
=\dot{a}_{1}(t)+C(a_{2}(t)+a_{3}(t))-2C^{2}x+C^{2}z+C^{2}y+2C^{2}%
x-C^{2}y-C^{2}z-\dot{a}(t)-C(a_{2}(t)+a_{3}(t))
\end{equation}%
\begin{equation}
=0.
\end{equation}
\newline Similarly by the nice symmetry of the functions (\ref{YUEN2}), we
could balance the second momentum equation (\ref{incom})$_{2,2}$:%
\begin{equation}
\frac{\partial}{\partial t}u_{2}+u_{1}\frac{\partial}{\partial x}u_{2}%
+u_{3}\frac{\partial}{\partial z}u_{2}+\frac{\partial}{\partial y}\left(
\frac{P}{\rho}\right)
\end{equation}%
\begin{align}
&  =\dot{a}_{2}(t)+\left[  a_{1}(t)+C\left(  y-z\right)  \right]
\frac{\partial}{\partial x}\left[  a_{2}(t)+C\left(  -x+z\right)  \right]
+\left[  a_{3}(t)+C\left(  x-y\right)  \right]  \frac{\partial}{\partial
z}\left[  a_{2}(t)+C\left(  -x+z\right)  \right] \nonumber\\
&  +\frac{\partial}{\partial y}\left[
\begin{array}
[c]{c}%
C^{2}\left[  x^{2}+y^{2}+z^{2}-\left(  xy+yz+xz\right)  \right]  -\left[
\dot{a}_{1}(t)+C(a_{2}(t)+a_{3}(t))\right]  x\\
-\left[  \dot{a}_{2}(t)+C(a_{1}(t)+a_{3}(t))\right]  y-\left[  \dot{a}%
_{3}(t)+C(a_{1}(t)+a_{2}(t))\right]  z+b(t)
\end{array}
\right]
\end{align}%
\begin{align}
&  =\dot{a}_{2}(t)+\left[  a_{1}(t)+C\left(  y-z\right)  \right]  \left(
-C\right)  +\left[  a_{3}(t)+C\left(  x-y\right)  \right]  C\\
&  +2C^{2}y-C^{2}x-C^{2}z-\dot{a}_{2}(t)-C(a_{1}(t)+a_{3}(t))\nonumber
\end{align}%
\begin{equation}
=0.
\end{equation}
\newline For the last equation (\ref{incom})$_{2,3}$, we have%
\begin{equation}
\frac{\partial}{\partial t}u_{3}+u_{1}\frac{\partial}{\partial x}u_{3}%
+u_{2}\frac{\partial}{\partial y}u_{3}+\frac{\partial}{\partial z}\left(
\frac{P}{\rho}\right)
\end{equation}%
\begin{align}
&  =\dot{a}_{3}(t)+\left[  a_{1}(t)+C\left(  y-z\right)  \right]
\frac{\partial}{\partial x}\left[  a_{3}(t)+C\left(  x-y\right)  \right]
+\left[  a_{2}(t)+C\left(  -x+z\right)  \right]  \frac{\partial}{\partial
y}\left[  a_{3}(t)+C\left(  x-y\right)  \right] \nonumber\\
&  +\frac{\partial}{\partial z}\left[
\begin{array}
[c]{c}%
C^{2}\left[  x^{2}+y^{2}+z^{2}-\left(  xy+yz+xz\right)  \right]  -\left[
\dot{a}_{1}(t)+C(a_{2}(t)+a_{3}(t))\right]  x\\
-\left[  \dot{a}_{2}(t)+C(a_{1}(t)+a_{3}(t))\right]  y-\left[  \dot{a}%
_{3}(t)+C(a_{1}(t)+a_{2}(t))\right]  z+b(t)
\end{array}
\right]
\end{align}%
\begin{equation}
=\dot{a}_{3}(t)+\left[  a_{1}(t)+C\left(  y-z\right)  \right]  C+\left[
a_{2}(t)+C\left(  -x+z\right)  \right]  (-C)+2C^{2}z-C^{2}y-C^{2}x-\dot{a}%
_{3}(t)-C(a_{1}(t)+a_{2}(t))
\end{equation}%
\begin{equation}
=0.
\end{equation}

Next, for the checking of the solutions (\ref{Yuen3}), we have for the first
momentum equation (\ref{incom})$_{2,1}$:%
\begin{equation}
=\frac{\partial}{\partial t}u_{1}+u_{2}\frac{\partial}{\partial y}u_{1}%
+u_{3}\frac{\partial}{\partial z}u_{1}+\frac{\partial}{\partial x}\left(
\frac{P}{\rho}\right)
\end{equation}
with the pressure function%
\begin{equation}
\frac{P}{\rho}=\left[
\begin{array}
[c]{c}%
-C^{2}\left(  x^{2}+y^{2}+z^{2}+xy+yz+xz\right)  -\left[  \dot{a}%
_{1}(t)+C(a_{2}(t)+a_{3}(t))\right]  x\\
-\left[  \dot{a}_{2}(t)+C(a_{1}(t)+a_{3}(t))\right]  y-\left[  \dot{a}%
_{3}(t)+C(a_{1}(t)+a_{2}(t))\right]  z+b(t)
\end{array}
\right]
\end{equation}
to have%
\begin{align}
&  =\dot{a}_{1}(t)+\left[  a_{2}(t)+C(x+z)\right]  \frac{\partial}{\partial
y}\left[  a_{1}(t)+C\left(  y+z\right)  \right]  +\left[  a_{3}(t)+C\left(
x+y\right)  \right]  \frac{\partial}{\partial z}\left[  a_{1}(t)+C\left(
y+z\right)  \right] \nonumber\\
&  +\frac{\partial}{\partial x}\left[
\begin{array}
[c]{c}%
-C^{2}\left(  x^{2}+y^{2}+z^{2}+xy+yz+xz\right)  -\left[  \dot{a}%
_{1}(t)+C(a_{2}(t)+a_{3}(t))\right]  x\\
-\left[  \dot{a}_{2}(t)+C(a_{1}(t)+a_{3}(t))\right]  y-\left[  \dot{a}%
_{3}(t)+C(a_{1}(t)+a_{2}(t))\right]  z+b(t)
\end{array}
\right]
\end{align}%
\begin{equation}
=\dot{a}_{1}(t)+\left[  a_{2}(t)+C(x+z)\right]  C+\left[  a_{3}(t)+C\left(
x+y\right)  \right]  C-2C^{2}x-C^{2}y-C^{2}z-\dot{a}(t)-C(a_{2}(t)+a_{3}(t))
\end{equation}%
\begin{equation}
=\dot{a}_{1}(t)+C(a_{2}(t)+a_{3}(t))+2C^{2}x+C^{2}z+C^{2}y-2C^{2}%
x-C^{2}y-C^{2}z-\dot{a}_{1}(t)-C(a_{2}(t)+a_{3}(t))
\end{equation}%
\begin{equation}
=0.
\end{equation}
Similarly, we can balance the second momentum equation (\ref{incom})$_{2,2}$
with the symmetric form of the functions (\ref{Yuen3}):%
\begin{equation}
=\frac{\partial}{\partial t}u_{2}+u_{1}\frac{\partial}{\partial x}u_{2}%
+u_{3}\frac{\partial}{\partial z}u_{2}+\frac{\partial}{\partial y}\left(
\frac{P}{\rho}\right)
\end{equation}%
\begin{equation}
=\dot{a}_{2}(t)+\left[  a_{1}(t)+C\left(  y+z\right)  \right]  \frac{\partial
}{\partial x}\left[  a_{2}(t)+C\left(  x+z\right)  \right]  +\left[
a_{3}(t)+C\left(  x+y\right)  \right]  \frac{\partial}{\partial z}\left[
a_{2}(t)+C\left(  x+z\right)  \right]  +\frac{\partial}{\partial y}\left(
\frac{P}{\rho}\right)
\end{equation}%
\begin{equation}
=\dot{a}_{2}(t)+\left[  a_{1}(t)+C\left(  y+z\right)  \right]  C+\left[
a_{3}(t)+C\left(  x+y\right)  \right]  C-2C^{2}y-C^{2}x-C^{2}z-\dot{a}%
_{2}(t)-C(a_{1}(t)+a_{3}(t))\nonumber
\end{equation}%
\begin{equation}
=0.
\end{equation}
For the last equation (\ref{incom})$_{2,3}$, we could get%
\begin{equation}
=\frac{\partial}{\partial t}u_{3}+u_{1}\frac{\partial}{\partial x}u_{3}%
+u_{2}\frac{\partial}{\partial y}u_{3}+\frac{\partial}{\partial z}\left(
\frac{P}{\rho}\right)
\end{equation}%
\begin{equation}
=\dot{a}_{3}(t)+\left[  a_{1}(t)+C\left(  y+z\right)  \right]  \frac{\partial
}{\partial x}\left[  a_{3}(t)+C\left(  x+y\right)  \right]  +\left[
a_{2}(t)+C\left(  x+z\right)  \right]  \frac{\partial}{\partial y}\left[
a_{3}(t)+C\left(  x+y\right)  \right]  +\frac{\partial}{\partial z}\left(
\frac{P}{\rho}\right) \nonumber
\end{equation}%
\begin{equation}
=\dot{a}_{3}(t)+\left[  a_{1}(t)+C\left(  y+z\right)  \right]  C+\left[
a_{2}(t)+C\left(  x+z\right)  \right]  C-2C^{2}z-C^{2}y-C^{2}x-\dot{a}%
_{3}(t)-C(a_{1}(t)+a_{2}(t))
\end{equation}%
\begin{equation}
=0.
\end{equation}
It is clear to see that\newline1. if $\left\vert a_{i}(T)\right\vert =\infty$
or $\left\vert \dot{a}_{i}(T)\right\vert =\infty$ for $i=1,$ $2$ or 3,with the
first finite positive constant $T$, the solutions (\ref{YUEN2}) and
(\ref{Yuen3}) blow up in the finite time $T$;\newline2. for all global $C^{1}$
functions $a_{i}(t)$, the solutions (\ref{YUEN2}) and (\ref{Yuen3}) globally
exist.\newline The proof is completed.
\end{proof}

Here the kinetic energy of the solutions (\ref{YUEN2}) and (\ref{Yuen3}) for
the incompressible fluids are infinite:%
\begin{equation}
\frac{1}{2}\int_{R^{3}}\vec{u}^{2}=+\infty.
\end{equation}
And the blowup solutions (\ref{YUEN2}) and (\ref{Yuen3}) are the other
examples with infinite energy for the incompressible Euler equations
(\ref{incom}), with respect to the blowup cylindrical ones in Gibbon, Moore
and Stuart's work \cite{GMS}.

Additionally it is because
\begin{equation}
\Delta\vec{u}=\vec{0}%
\end{equation}
in the solutions (\ref{YUEN}) and (\ref{Yuen01}) to be the corresponding
solutions for the 3-dimensional compressible Navier-Stokes equations:%
\begin{equation}
\left\{
\begin{array}
[c]{c}%
{\normalsize \rho}_{t}{\normalsize +\nabla\cdot(\rho\vec{u})}\text{
}{\normalsize =}\text{ }{\normalsize 0}\\
\rho\left[  \vec{u}_{t}+\left(  \vec{u}\cdot\nabla\right)  \vec{u}\right]
+\nabla P(\rho)\text{ }{\normalsize =}\text{ }\mu\Delta\vec{u}%
\end{array}
\right.
\end{equation}
with $\mu>0.$\newline And the solutions (\ref{YUEN2}) and (\ref{Yuen3}) are
the corresponding ones for the 3-dimensional incompressible Navier-Stokes
equations:%
\begin{equation}
\left\{
\begin{array}
[c]{c}%
{\normalsize \operatorname{div}\vec{u}}\text{ }{\normalsize =}\text{
}{\normalsize 0}\\
\rho\left[  \vec{u}_{t}+\left(  \vec{u}\cdot\nabla\right)  \vec{u}\right]
+\nabla P(\rho)\text{ }{\normalsize =}\text{ }\mu\Delta\vec{u}%
\end{array}
\right.
\end{equation}
with $\mu>0.$

\section{Discussion}

1. We observe that the constructed density functions $\rho$ (\ref{YUEN}) and
(\ref{Yuen01}), and the pressure functions (\ref{YUEN2}) and (\ref{Yuen3})
share some nice algebraic geometry, but we do not understand for the
implication of this geometric structure for the fluids. Could the solutions
with small perturbations of these solutions converge to the original ones?

2. Could we just cut a finite partial volume of the solutions to construct new
week solutions? If we could, what happens for the corresponding weak solutions
of the classical ones with infinite energy?

3. By taking consideration with Makino's blowup solutions form (\ref{Lii11}),
we could guess that the more general functional form for the $3$-dimensional
velocities $\vec{u}$ in Cartesian coordinate:%
\begin{equation}
\left\{
\begin{array}
[c]{c}%
u_{1}=a_{1}(t)+b_{11}(t)x+b_{12}(t)y+b_{13}(t)z\\
u_{2}=a_{2}(t)+b_{21}(t)x+b_{22}(t)y+b_{23}(t)z\\
u_{3}=a_{3}(t)+b_{31}(t)x+b_{32}(t)y+b_{33}(t)z
\end{array}
\right.  \label{functional}%
\end{equation}
with the local smooth time $C^{1}$ functions $a_{i}(t)$ and $b_{ij}(t)$ for
$i,$ $j=1,$ $2$ and $3,$\newline could be the more general solutions structure
for the systems. Then, it is more difficult to handle the linear functional
structure for the higher dimensional cases $R^{N}$ with $N\geq4$. Could we
construct a blowup example with rotation for the incompressible fluids? In
principle, we could adopt the guessing approach to construct the density
functions. Some trials with luck are needed to obtain a right novel functional
form to construct solutions for the systems, but it costs a lot of time for
computation. Thus, we hope to arouse the readers' interests to investigate
more systematic approaches to calculate the possible particular solutions with
the linear functional velocities $\vec{u}$ (\ref{functional}) in Cartesian
coordinate or other coordinates.

Finally, further researches are needed for understanding this class
(\ref{functional}) of blowup or global solutions for the systems in the future.

\end{document}